\documentclass{article}

\usepackage{arxiv}

\usepackage[utf8]{inputenc} 
\usepackage[T1]{fontenc}    
\usepackage{hyperref}       
\usepackage{url}            
\usepackage{booktabs}       
\usepackage{amsfonts}       

\bibliographystyle{plainurl}

\usepackage{comment}
\usepackage{cite}
\usepackage{tabularx}

\usepackage[english]{babel}
\usepackage{amsmath}
\usepackage{amsthm}
\usepackage{amssymb}
\usepackage{algorithm}
\usepackage{algorithmic}
\usepackage{caption}
\usepackage{subcaption}
\usepackage{graphicx}
\graphicspath{ {./} }
\usepackage{xcolor}
\usepackage{tocloft}
\usepackage{thmtools,thm-restate}

\usepackage{listings}

\theoremstyle{plain}
\newtheorem{theorem}{Theorem}[section]

\newtheorem{invariant}[theorem]{Invariant}
\newtheorem{lemma}[theorem]{Lemma}

\newcommand{\R}{\mathbb{R}}
\newcommand{\A}{\mathcal{A}}

\title{Dynamic Maintenance of the Lower Envelope of Pseudo-Lines}

\author{
  Pankaj K. Agarwal \\
  Department of Computer Science,\\ 
  Duke University, Durham, NC 27708, USA\\ 
  pankaj@cs.duke.edu
  \And
  Ravid Cohen \\
  School of Computer Science,\\
  Tel-Aviv University, Tel-Aviv 69978, Israel
  \And
  Dan Halperin \\
  School of Computer Science,\\
  Tel-Aviv University, Tel-Aviv 69978, Israel
  \And
  Wolfgang Mulzer \\
  Institut f\"ur Informatik,\\ 
  Freie Universit\"at Berlin, D-14195 Berlin, Germany\\
  mulzer@inf.fu-berlin.de
}

\begin{document}
\maketitle

\begin{abstract}
 We present a fully dynamic data structure for the maintenance of lower envelopes of pseudo-lines. The structure has $O(\log^2 n)$ update time and $O(\log n)$ vertical ray shooting query time. To achieve this performance, we devise a new algorithm for finding the intersection between two lower envelopes of pseudo-lines in $O(\log n)$ time, using \emph{tentative} binary search; the  lower envelopes are special in that at $x=-\infty$ any pseudo-line contributing to the first envelope lies below every pseudo-line contributing to the second envelope. The structure requires $O(n)$ storage space.
\end{abstract}


\section{Introduction}

A set of pseudo-lines in the plane is a set of infinite $x$-monotone curves each pair of which intersects at exactly one point. Arrangements of pseudo-lines have been intensively studied in discrete and computational geometry;
see the recent survey on arrangements~\cite{hs-a-18} for a review of combinatorial bounds and algorithms for arrangements of pseudo-lines.
%
 {In this paper we consider the following problem:} Given $n$ pseudo-lines in the plane, dynamically maintain their lower envelope such that one can efficiently answer vertical ray shooting queries from $y=-\infty$. The dynamization is under insertions and deletions.  If we were given $n$ lines (rather than pseudo-lines) then we could have used any of several efficient data structures for the purpose~\cite{DBLP:conf/focs/BrodalJ02,DBLP:journals/jacm/Chan01,Overmars1981,BrodalJ00,KaplanTT01}; these are, however, not directly suitable for pseudo-lines. There are several structures that rely on shallow cuttings and can handle pseudo-lines~\cite{DBLP:journals/algorithmica/AgarwalM95,DBLP:journals/jacm/Chan10,DBLP:conf/soda/KaplanMRSS17}. The solution that we propose here is, however, considerably more efficient than what these structures offer.
We devise a fully dynamic  data structure with $O(\log^2 n)$ update-time, $O(\log n)$ vertical ray-shooting query-time, and $O(n)$ space for the maintenance of $n$ pseudo-lines.
The structure is a rather involved adaptation of the Overmars-van Leeuwen structure~\cite{Overmars1981} to our setting, which matches the performance of the original algorithm for the case of lines. 
The key innovation is a new algorithm for finding the
intersection between two lower envelopes of planar pseudo-lines
in $O(\log n)$ time, using \emph{tentative} binary search (where each pseudo-line in one envelope is ``smaller'' than every pseudo-line in the other envelope in a sense to be made precise below).
To the best of our knowledge this is the most efficient data structure for the case of pseudo-lines to date.





\section{Preliminaries}

Let $E$ be a finite family of pseudo-lines in the plane, and
let $\ell$ be a vertical line strictly to the left of 
the left-most intersection point between lines in $E$ (namely to the left of all the vertices of the arrangement $\A(E)$). The line $\ell$
defines a total order $\leq$ on the pseudo-lines in $E$, 
namely for $e_1, e_2 \in E$, we have $e_1 \leq e_2$ if and 
only if $e_1$ intersects $\ell$ below $e_2$. Since each 
pair of pseudo-lines in $E$ crosses exactly once, it follows 
that if we consider a vertical line $\ell'$ strictly to the 
right of the right-most vertex of $\A(E)$, the order 
of 
the intersection points between $\ell'$ and $E$, from 
bottom to top, is exactly reversed.

The \emph{lower envelope} $\mathcal{L}(E)$ of $E$ is the 
$x$-monotone curve obtained by taking the pointwise 
minimum of the pseudo-lines in $E$. Combinatorially, the 
lower envelope $\mathcal{L}(E)$ is a sequence of connected 
segments of the pseudo-lines in $E$, where the first
and last segment are unbounded. Two properties
are crucial for our data structure: (A) every pseudo-line 
contributes at most one segment to $\mathcal{L}(E)$; and  (B) the
order of these segments corresponds exactly to the 
order $\leq$ on $E$ defined above. 
In fact, our data structure works for every set of planar curves
with properties (A) and (B) (with an appropriate
order $\leq)$, even if they 
are not pseudo-lines in the strict sense.

We assume a computational model in which primitive
operations on pseudo-lines, such as computing the 
intersection point of two pseudo-lines or determining 
the intersection point of a pseudo-line 
with a vertical line 
can be performed in constant time.

\section{Data structure and operations}

\subparagraph{The tree structure.}
Our primary data structure is a balanced binary 
search tree $\Xi$.
Such a tree data structure supports insert and delete, each
in $O(\log n)$ time. 
The leaves of $\Xi$ contain 
the pseudo-lines, from left to right in the sorted order defined above. 
An internal node $v \in \Xi$ represents the 
lower envelope of the pseudo-lines in its subtree.  
More precisely, every leaf $v$ of $\Xi$ stores a 
single pseudo-line $e_v \in E$. For an inner node $v$ 
of $\Xi$, we write $E(v)$ for the set of 
pseudo-lines in the subtree rooted at $v$.
We denote the lower envelope of $E(v)$ by  
$\mathcal{L}\big(v\big)$.
The inner node $v$ has the following variables:
\begin{itemize}
\item $f$, $\ell$, $r$: a pointer to the parent, 
left child and right child of $v$, respectively;
\item $\max$: the maximum pseudo-line in $E(v)$;
\item $\Lambda$: 
a balanced binary search tree that stores the 
prefix or suffix of $\mathcal{L}(v)$ that 
is not on the lower envelope 
$\mathcal{L}(f)$ of the parent (in the root,
we store the lower envelope of $E$). 
The leaves of $\Lambda$ store the pseudo-lines that support the
segments on the lower envelope, 
with the
endpoints of the segments, sorted from left to right.
An inner node of $\Lambda$ stores the common point of the last segment
in the left subtree and the first segment in the right subtree.
We will need split and join operations on the binary trees, which can be implemented in $O(\log n)$ time.
\end{itemize}

\subparagraph*{Queries.} 
We now describe the query operations available 
on our data structure.
In a \emph{vertical ray-shooting query}, we are 
given a value $x_0 \in \R$, 
and we would like to find the pseudo-line 
$e \in E$ where the vertical
line $\ell: x = x_0$ intersects $\mathcal{L}(E)$.
Since the root of $\Xi$ explicitly stores
$\mathcal{L}(E)$ in a balanced binary search tree,
this query can be answered easily
in $O(\log n)$ time.

\begin{restatable}{restatable_lemma}{verticalRs}
\label{lem:vertical_rs}
Let $\ell: x = x_0$ be
a vertical ray shooting query. 
We can find the pseudo-line(s) where
$\ell$ intersects $\mathcal{L}(E)$ in $O(\log n)$
time.
\end{restatable}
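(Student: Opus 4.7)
The plan is to exploit the fact that the root of $\Xi$ explicitly stores the entire lower envelope $\mathcal{L}(E)$ inside its balanced binary search tree $\Lambda$, so an ordinary binary search on $\Lambda$ suffices (no tentative search is needed at this stage). Starting at the root of $\Lambda$, at each internal node $u$ I compare $x_0$ with the $x$-coordinate of the common point stored at $u$, which separates the $x$-ranges of the segments contained in the left and right subtrees of $u$. If $x_0$ is smaller, I descend into the left child; otherwise into the right child. The search terminates at a leaf of $\Lambda$, which stores the pseudo-line $e \in E$ supporting the segment of $\mathcal{L}(E)$ whose $x$-range contains $x_0$; this $e$ is returned, and its intersection with $\ell$ is then evaluated in $O(1)$ time via the primitive operation on pseudo-lines.

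Correctness follows directly from the invariant imposed on $\Lambda$: its leaves enumerate the segments of $\mathcal{L}(E)$ from left to right, and every internal node stores the exact $x$-coordinate where its left subtree's rightmost segment ends and its right subtree's leftmost segment begins. Hence at each step of the descent the subtree whose $x$-range contains $x_0$ is uniquely identified. If $x_0$ happens to coincide with a stored common point, then two pseudo-lines of $\mathcal{L}(E)$ intersect $\ell$ at that point; they can both be reported upon reaching that internal node, which affects neither correctness nor the asymptotic running time.

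For the running time, property (A) from the preliminaries guarantees that every pseudo-line contributes at most one segment to $\mathcal{L}(E)$, so $\Lambda$ has at most $n$ leaves; being balanced, it has height $O(\log n)$. Each node visited on the search path requires only $O(1)$ work, yielding the claimed $O(\log n)$ bound. There is no real technical obstacle here: the lemma is an immediate consequence of maintaining $\mathcal{L}(E)$ explicitly at the root of $\Xi$ as a balanced BST indexed by $x$-coordinate, and it serves mainly to record that the data structure meets the advertised query bound.
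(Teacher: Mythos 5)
Your proof is correct and follows exactly the paper's approach: a standard binary search for $x_0$ in the balanced tree $\Lambda$ stored at the root of $\Xi$, which explicitly represents $\mathcal{L}(E)$. The paper's own proof is just a two-sentence version of the same argument; your added detail about the internal-node keys and the boundary case $x_0$ coinciding with a breakpoint is a faithful elaboration, not a different route.
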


\begin{proof}
Let $r$ be the root of $\Xi$. We perform an
explicit search for $x_0$ in 
$r.\Lambda$ and return the result. Since $r.\Lambda$
is a balanced binary search tree, this takes
$O(\log n)$ time.
\end{proof}

\subparagraph*{Update.} 
To insert or delete a pseudo-line
$e$ in $\Xi$, we follow
the method of Overmars and van 
Leeuwen~\cite{Overmars1981}. 
We delete or insert a leaf for $e$ in $\Xi$
using standard binary search tree techniques (the $v.\max$ pointers
guide the search in $\Xi$). As we go down,
we construct the lower envelopes for
the nodes hanging off the search path, using 
split and join operations on the $v.\Lambda$ trees. Going
back up, we recompute the information $v.\Lambda$ and
$v.\max$.
To update the $v.\Lambda$ trees, we need the following 
operation: given two lower envelopes $\mathcal{L}_\ell$ 
and $\mathcal{L}_r$, such that all pseudo-lines in $\mathcal{L}_\ell$
are smaller than all pseudo-lines in $\mathcal{L}_r$, 
compute the intersection point $q$ of $\mathcal{L}_\ell$
and $\mathcal{L}_r$. In the next section, we will see
how to do this in $O(\log n)$ time,
where $n$ is the size of $E$.
Since there are $O(\log n)$ nodes in $\Xi$ 
affected by an update, this procedure takes
$O(\log^2 n)$ time. More details can be found
in the literature~\cite{Overmars1981,PreparataSh85}.

\begin{lemma}
It takes $O(\log^2 n)$ to
insert or remove a pseudo-line in $\Xi$.
\end{lemma}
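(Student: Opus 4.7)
The plan is to adapt the Overmars--van Leeuwen update procedure to our setting, following the three-phase template: a top-down ``dismantling'' of the envelopes along the search path, the actual leaf update, and a bottom-up ``reconstruction.'' The search path has length $O(\log n)$, and I aim to show that $O(\log n)$ work suffices at each of its nodes.

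First I would locate the insertion/deletion position for $e$ in $\Xi$ using the $v.\max$ pointers. Descending along the search path, at each node $v$ the structure $v.\Lambda$ currently stores only the prefix or suffix of $\mathcal{L}(v)$ not inherited by $\mathcal{L}(f)$. Before descending from $v$, I would use a constant number of split and join operations on the $\Lambda$ trees of $v$ and of its children to redistribute the segments so that, after this step, the off-path child of $v$ explicitly stores the full lower envelope of its own subtree. Each split/join on a balanced binary search tree runs in $O(\log n)$ time, so this dismantling costs $O(\log n)$ per node on the search path.

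Once the search path has been processed, I would perform the standard BBST leaf insertion or deletion of $e$, together with any rebalancing rotations. The ascent then works symmetrically: at each node $v$ on the path, the children now store the full envelopes $\mathcal{L}(v.\ell)$ and $\mathcal{L}(v.r)$, and by the BST invariant every pseudo-line in $E(v.\ell)$ is smaller (in the order $\leq$) than every pseudo-line in $E(v.r)$. Hence the intersection point $q$ of these two envelopes is well defined and, by the envelope-intersection algorithm of the next section, can be computed in $O(\log n)$ time. Splitting each child envelope at $q$ and joining the appropriate halves yields $\mathcal{L}(v)$; an additional constant number of split/join operations carves out the prefix or suffix inherited by $\mathcal{L}(f)$ to obtain the new $v.\Lambda$, and $v.\max$ is updated in $O(1)$ time.

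Multiplying $O(\log n)$ nodes on the search path by $O(\log n)$ work per node yields the desired $O(\log^2 n)$ bound; the rebalancing rotations can be absorbed into this since each rotation affects only $O(1)$ nodes and triggers the same sort of local reconstruction. The main technical obstacle is the $O(\log n)$-time envelope-intersection subroutine for pseudo-lines, which is the principal contribution of the paper and is used here as a black box; a minor bookkeeping point is tracking, at each node $v$, whether it is a left or right child of $f$ so as to identify which end of $\mathcal{L}(v)$ is the uninherited portion stored in $v.\Lambda$.
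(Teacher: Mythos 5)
Your proposal is correct and follows essentially the same route as the paper: the Overmars--van Leeuwen template of top-down dismantling via split/join on the $\Lambda$ trees, leaf update, and bottom-up reconstruction using the $O(\log n)$-time envelope-intersection subroutine as a black box, over $O(\log n)$ affected nodes. In fact your write-up is more detailed than the paper's own proof, which defers most of these mechanics to the literature.
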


\section{Finding the intersection point of two lower envelopes}

Given two lower envelopes $\mathcal{L}_\ell$ 
and $\mathcal{L}_r$ such that all pseudo-lines 
in $\mathcal{L}_\ell$ are smaller than all 
pseudo-lines in $\mathcal{L}_r$, we would like to 
find the 
intersection point $q$ between $\mathcal{L}_\ell$ 
and $\mathcal{L}_r$. We assume that $\mathcal{L}_\ell$ 
and $\mathcal{L}_r$ are represented as balanced 
binary search trees. The leaves of $\mathcal{L}_\ell$ 
and $\mathcal{L}_r$ store the pseudo-line segments 
on the lower envelopes, sorted from left to
right. 
We assume that the pseudo-line segments in the
leaves are half-open, containing 
their right, but not their left endpoint in 
$\mathcal{L}_\ell$; and their left, but not their 
right endpoint in $\mathcal{L}_r$.\footnote{We 
actually store both endpoints in the trees,
but the intersection algorithm uses only one
of them, depending on the role the tree plays in the
algorithm.}
Thus, it is
uniquely determined which leaves of $\mathcal{L}_\ell$ 
and $\mathcal{L}_r$ contain the intersection point $q$.
A leaf $v$ stores the pseudo-line $\mathcal{L}(v)$
that supports the segment for $v$, 
as well as an endpoint $v.p$ of the segment,
namely the left endpoint
if $v$ is a leaf of  $\mathcal{L}_\ell$,
and the right endpoint if
$v$ is a leaf of $\mathcal{L}_r$.\footnote{If the
segment is unbounded, the endpoint
might not exist. In this case, we use
a symbolic endpoint at infinity that
lies below every other pseudo-line.}
An inner node $v$ 
stores the intersection point $v.p$ between the 
largest pseudo-line in the left subtree 
$v.\ell$ of $v$ and the smallest pseudo-line 
in the right subtree $v.r$ of $v$, together with 
the lower envelope $\mathcal{L}(v)$ of these two
pseudo-lines. These trees can be obtained by
appropriate split and join operations from the
$\Lambda$ trees stored in $\Xi$.

\begin{figure}[ht]
	\centering
    \includegraphics[width=\textwidth]
    {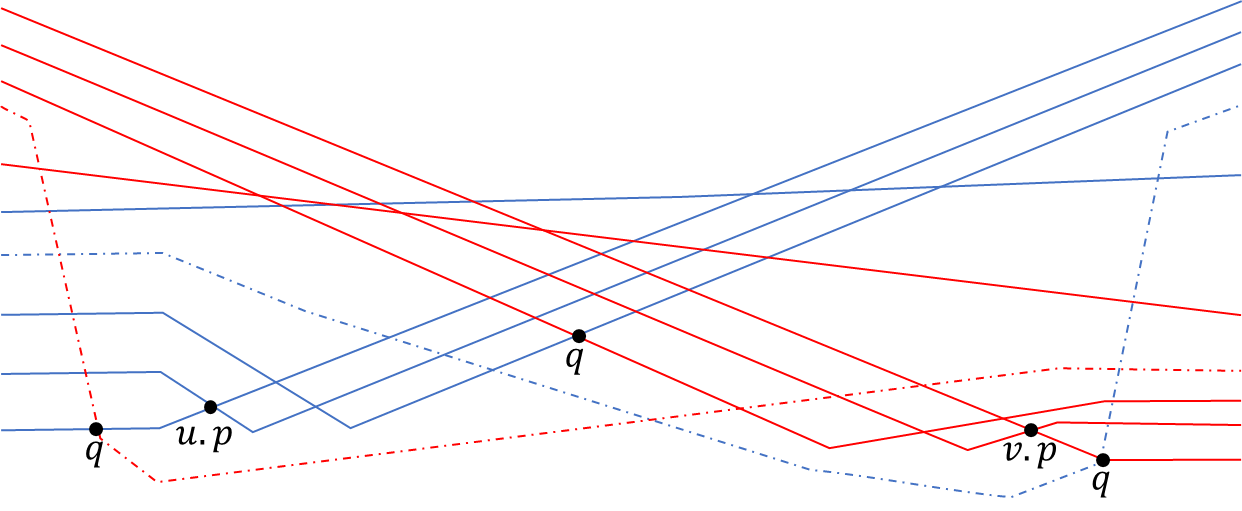}
    \caption{An example of Case~3. 
    $\mathcal{L}_\ell$ is blue; $\mathcal{L}_r$ is red.
    The solid pseudo-lines are fixed. 
    The dashed pseudo-lines are optional, namely, either
    none of the dashed pseudo-lines 
    exists or exactly one of them exists. $u.p$
    and $v.p$ are the current points;
    and Case~3 applies.
    Irrespective of the local
    situation at $u$ and $v$, the intersection point
    can be to the left of $u.p$, between $u.p$ and 
    $v.p$ or to the right of $v.p$, depending 
    on which one of the dashed pseudo-lines
    exists.} 
	\label{f:hard_case_pseudo_lines} 
\end{figure}

Let $u^* \in \mathcal{L}_\ell$ and 
$v^* \in \mathcal{L}_r$ be the leaves whose segments
contain $q$.  Let 
$\pi_\ell$ be the path in $\mathcal{L}_\ell$
from the root to $u^*$ and 
$\pi_r$ 
the path in $\mathcal{L}_r$ from the root to $v^*$. 
Our strategy is as follows: we simultaneously 
descend in $\mathcal{L}_\ell$ and in $\mathcal{L}_r$. 
Let $u$  be the current node in $\mathcal{L}_\ell$ 
and $v$ the current node in $\mathcal{L}_r$.
In each step, we perform a local test on 
$u$ and $v$ to decide how to proceed. 
There are three possible outcomes:
\begin{enumerate}
\item $u.p$ is on or above $\mathcal{L}(v)$: the 
intersection point $q$
is equal to or to the left of $u.p$. If $u$ is
an inner node, then $u^*$ cannot lie in $u.r$; 
if $u$ is a leaf, then $u^*$ lies strictly
to the left of $u$;
\item $v.p$ lies on or above $\mathcal{L}(u)$: 
the intersection point $q$
is equal to or to the right of $v.p$. 
If $v$ is an inner node, then $v^*$ cannot lie in
$v.\ell$; if $v$ is a leaf, then $v^*$ lies 
strictly to the right of $v$; 
\item $u.p$ lies below $\mathcal{L}(v)$ and $v.p$ lies 
below $\mathcal{L}(u)$: then, $u.p$ lies
strictly to the left of $v.p$ (since 
we are dealing with pseudo-lines). It must be
the case that $u.p$ is strictly to the left of
$q$ or $v.p$ is strictly to the right of $q$ (or both).
In the former case, if $u$ is an inner node,
$u^*$ lies in or to the right of $u.r$ and if $u$ is
a leaf, then $u^*$
is $u$ or a leaf to the right of $u$. In the
latter case, if $v$ is an inner node, $v^*$ lies in
or to the left of 
$v.\ell$ and if $v$ is a leaf, then $v^*$ is $v$
or a leaf to the left of $v$; see Figure~\ref{f:hard_case_pseudo_lines}.
\end{enumerate}
Overmars and van Leeuwen~\cite{Overmars1981,PreparataSh85}
describe a method for the case that $\mathcal{L}_\ell$ and
$\mathcal{L}_r$ contain lines. Unfortunately, it is not 
clear how their strategy applies in the more general
setting of pseudo-lines. The reason for this lies
in Case~3: in this case, it is not immediately
obvious how to proceed, because the correct step
might be either to go to $u.r$ or to $v.\ell$. 
In the case of lines, Overmars and van Leeuwen
can solve this ambiguity by comparing the slopes
of the relevant lines. For pseudo-lines, however,
this does not seem to be possible. For an example,
refer to Figure~\ref{f:hard_case_pseudo_lines}, where
the local situation at $u$ and $v$ does not determine
the position of the intersection point $q$. Therefore,
we present an alternative strategy.

\begin{figure}[ht]
    \centering
    \includegraphics[scale=0.85]{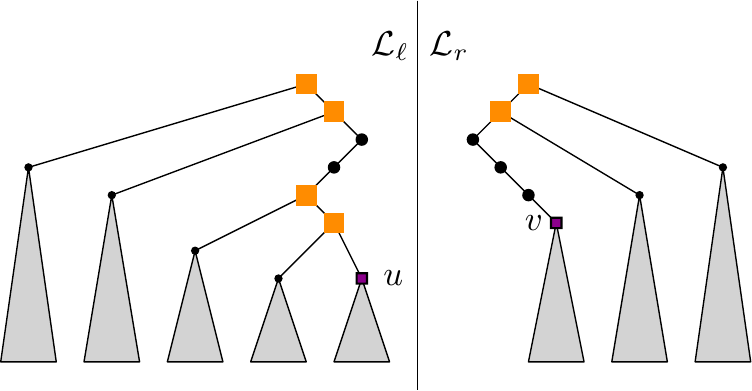}
    \caption{The invariant:
    the current search nodes are $u$ and $v$.
    \texttt{uStack} contains all nodes on the
    path from the root to $u$ where the path goes to a right
    child (orange squares), \texttt{vStack} contains all
    nodes from the root to $v$ where the path goes to a left child (orange squares). The final leaves $u^*$ and $v^*$ are in one of the
    gray subtrees; and at least one of them is under $u$ or under $v$.}
    \label{fig:invariant}
\end{figure}
\begin{figure}[ht]
    \centering
    \includegraphics[scale=0.67]{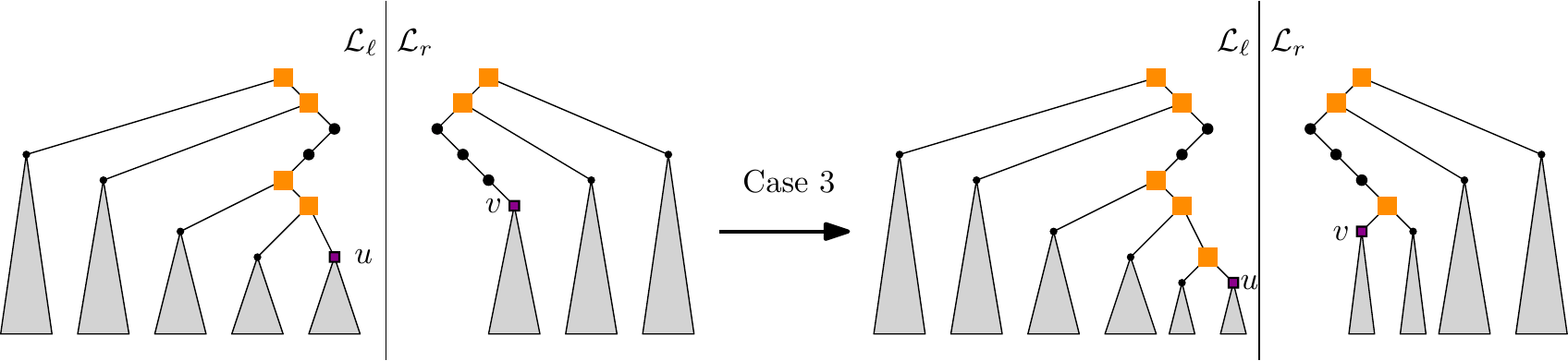}
    \caption{Comparing $u$ to $v$: in Case~3,
    we know that $u^*$ is in $u.r$ or $v^*$ is in $v.\ell$; we go to
    $u.r$ and to $v.\ell$.}
    \label{fig:case3}
\end{figure}
\begin{figure}[ht]
    \centering
    \includegraphics[scale=0.67]{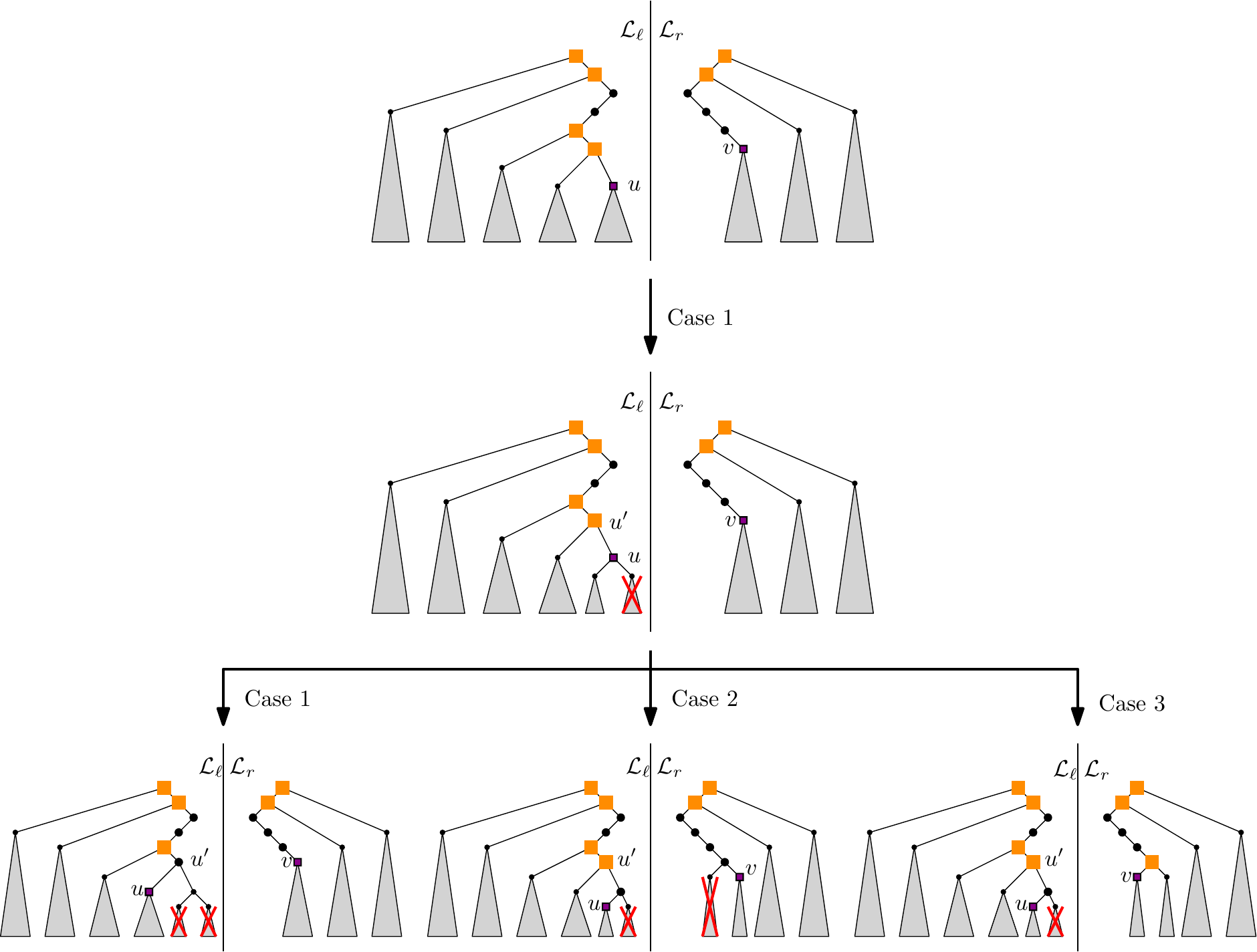}
    \caption{Comparing $u$ to $v$:
    in Case~1, we know that
    $u^*$ cannot be in $u.r$. 
    We compare $u'$ and $v$ to decide
    how to proceed: 
    in Case~1, we know that $u^*$ cannot be in $u'.r$; we go to
    $u'.\ell$; in Case 2, we know that $u^*$ cannot be in $u.r$ and that 
    $v^*$ cannot be in $v.\ell$; we go to $u.\ell$ and to $v.r$; in Case~3,
    we know that $u^*$ is in $u'.r$ (and hence in $u.\ell$) or in $v.\ell$;
    we go to $u.\ell$ and to $v.\ell$. Case~2 is not
    shown as it is symmetric.}
    \label{fig:case1}
\end{figure}

We will maintain 
the invariant that the subtree at $u$ contains $u^*$ or the
subtree at $v$ contains $v^*$ (or both). When comparing 
$u$ and $v$, one of the three 
cases occurs. In Case~3, $u^*$ must be
in $u.r$, or $v^*$ must be in $v.\ell$;
see Figure~\ref{fig:case3}. 
We move $u$ to $u.r$
and $v$ to $v.\ell$. One of these moves must be correct,
but the other move might be mistaken: we might have gone
to $u.r$ even though $u^*$ is in $u.\ell$ or to $v.\ell$
even though $v^*$ is in $v.r$. To correct this,
we remember the current $u$ in a stack \texttt{uStack} and
the current $v$ in a stack \texttt{vStack}, so that we can
revisit $u.\ell$ or $v.r$ if it becomes necessary. This leads
to the general situation shown in Figure~\ref{fig:invariant}:
$u^*$ is below $u$ or in a left subtree of a node
on $\texttt{uStack}$, and $v^*$ is below $v$ or in a right 
subtree of a node on $\texttt{vStack}$, and at least one of
$u^*$ or $v^*$ must be below $u$ or $v$, respectively. Now, if
Case~1 occurs when comparing $u$ to $v$, we can exclude the
possibility that $u^*$ is in $u.r$. Thus, $u^*$ might be in
$u.\ell$, or in the left subtree of a node in \texttt{uStack}; 
see Figure~\ref{fig:case1}.
To make progress, we now compare $u'$, the top of \texttt{uStack},
with $v$. Again, one of the three cases occurs. In Case~1,
we can deduce that going to $u'.r$ was mistaken, and we move
$u$ to $u'.\ell$, while $v$ does not move. In the other cases,
we cannot rule out that $u^*$ is to the right of $u'$, and we 
move $u$ to $u.\ell$, keeping the invariant that $u^*$ is either
below $u$ or in the left subtree of a node on \texttt{uStack}.
However, to ensure that the search progresses, we now must also
move $v$. In Case~2, we can rule out $v.\ell$, and we move 
$v$ to $v.r$. In Case~3, we move $v$ to $v.\ell$. In this way,
we keep the invariant and always make progress: in each step,
we either discover a new node on the correct search
paths, or we pop one erroneous move from one of the two stacks.
Since the total length of the correct search paths is
$O(\log n)$, and since we push an element onto the stack 
only when discovering a new correct node, 
the total search time is $O(\log n)$; see Figure~\ref{fig:algodemo} for an example run.
The following pseudo-code gives the details
of our algorithm, including all corner cases.

\begin{lstlisting}[backgroundcolor = \color{lightgray}, mathescape=true]
oneStep($u$, $v$)
     do compare($u$, $v$):
         Case 3: 
             if $u$ is not a leaf then 
                 uStack.push($u$); $u \leftarrow u.r$
             if $v$ is not a leaf then
                 vStack.push($v$); $v \leftarrow v.\ell$
             if $u$ and $v$ are leaves then
                 return $u = u^*$ and $v = v^*$
         Case 1:
             if uStack is empty then
                 $u \leftarrow u.\ell$
             else if $u$ is a leaf then
                 $u \leftarrow \texttt{uStack.pop().}\ell$
             else
                 $u' \leftarrow \texttt{uStack.top()}$
                 do compare($u'$, $v$)
                     Case 1: 
                         uStack.pop(); $u \leftarrow u'.\ell$
                     Case 2:
                         $u \leftarrow u.\ell$ 
                         if $v$ is not a leaf then
                             $v. \leftarrow v.r$
                     Case 3:
                         $u \leftarrow u.\ell$
                         if $v$ is not a leaf then
                             vStack.push($v$); $v \leftarrow v.\ell$
         Case 2:
            symmetric
\end{lstlisting}

We will show that the search procedure maintains
the following invariant:

\begin{invariant}
\label{inv:intersection}
The leaves in all subtrees $u'.\ell$, for 
$u' \in \texttt{uStack}$, together with the
leaves under $u$ constitute a contiguous 
prefix of the leaves in  $\mathcal{L}_\ell$, which
contains $u^*$. 
Also, the leaves in all subtrees
$v'.r$, $v' \in \texttt{vStack}$,  
together with the leaves under $v$ constitue a 
contiguous suffix of the leaves of $\mathcal{L}_r$,
which contains $v^*$. Furthermore,
either $u \in \pi_\ell$ or $v \in \pi_r$ (or both).
\end{invariant}

Invariant~\ref{inv:intersection} holds at the 
beginning, when both stacks are empty,
$u$ is the root of $\mathcal{L}_\ell$ and $v$ is the
root of $\mathcal{L}_r$. To show that the invariant
is maintained, we first consider the special case
when one of the two searches has already discovered
the correct leaf:
\begin{lemma}
\label{lem:leaf_inv}
Suppose that Invariant~\ref{inv:intersection} holds and
that Case~3 occurs when comparing $u$ to $v$.
If $u = u^*$, then
$v \in \pi_r$ and, if $v$ is not a leaf, $v.\ell \in \pi_r$.
Similarly, if $v = v^*$, then $u \in \pi_\ell$ and,
if $u$ is not a leaf, $u.r \in \pi_\ell$.
\end{lemma}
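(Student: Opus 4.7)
The plan is to reduce both directions of the lemma to a single $x$-coordinate inequality and then read off the tree-structural conclusion from Invariant~\ref{inv:intersection}. The two directions are symmetric, so I would focus on the first: under the hypotheses $u = u^*$ and Case~3, I aim to prove $x(v.p) > x_q$, where $x_q$ is the $x$-coordinate of the intersection $q$ of $\mathcal{L}_\ell$ and $\mathcal{L}_r$.

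The proof of this inequality would be by contradiction. Suppose instead that $x(v.p) \leq x_q$. The half-open convention for $\mathcal{L}_\ell$ forces $x(u.p) < x_q$, and the Case~3 analysis already ensures $x(u.p) < x(v.p)$. Consequently $x(v.p)$ lies strictly inside the $x$-range of the segment of $u^* = u$ on $\mathcal{L}_\ell$, so $\mathcal{L}_\ell(x(v.p)) = \mathcal{L}(u)(x(v.p))$. Two further observations drive the contradiction: (i) $\mathcal{L}(v)$ is the lower envelope of a subset of the pseudo-lines whose overall lower envelope is $\mathcal{L}_r$, so $\mathcal{L}_r \leq \mathcal{L}(v)$ pointwise and $y(v.p) = \mathcal{L}(v)(x(v.p))$; (ii) for $x < x_q$, $\mathcal{L}_\ell(x) < \mathcal{L}_r(x)$, and at $x = x_q$ the two envelopes agree. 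Combining these with the Case~3 inequality $y(v.p) < \mathcal{L}(u)(x(v.p))$ produces the chain
\begin{equation*}
y(v.p) \;<\; \mathcal{L}(u)(x(v.p)) \;=\; \mathcal{L}_\ell(x(v.p)) \;\leq\; \mathcal{L}_r(x(v.p)) \;\leq\; y(v.p),
\end{equation*}
a contradiction; in the boundary case $x(v.p) = x_q$ the middle $\leq$ becomes an equality but the outer strict inequality still fails.

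Once $x_q < x(v.p)$ is established, the structural claims fall out quickly. If $v$ is an inner node, $x(v.p)$ is exactly the breakpoint between the leaves of $v.\ell$ and $v.r$, so the leaf $v^*$ whose segment contains $x_q$ must lie in $v.\ell$ under the half-open convention; hence $v.\ell \in \pi_r$ and a fortiori $v \in \pi_r$. If $v$ is a leaf, Invariant~\ref{inv:intersection} places $v^*$ either at $v$ itself or in some subtree $v'.r$ with $v' \in \mathtt{vStack}$; the latter sits strictly to the right of $x(v.p)$, contradicting $x_q < x(v.p)$, so $v = v^*$. The symmetric direction ($v = v^*$) would be handled by the mirror argument, deriving $x(u.p) < x_q$ from the fact that $\mathcal{L}_\ell > \mathcal{L}_r$ to the right of $q$ together with $\mathcal{L}(u) \geq \mathcal{L}_\ell$ pointwise.

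The main obstacle I anticipate is the bookkeeping around the half-open convention and around the leaf-versus-inner-node distinction for $\mathcal{L}(u)$ and $\mathcal{L}(v)$ (in particular verifying $y(v.p) = \mathcal{L}(v)(x(v.p))$ in both sub-cases); the geometric heart of the argument, however, is just the three-term inequality chain above.
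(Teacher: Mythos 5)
Your proof is correct and takes essentially the same route as the paper's: both establish that $q$ lies strictly to the left of $v.p$ by deriving a contradiction from $v.p$ being a point on (or above) $\mathcal{L}_r$ that would lie below $\mathcal{L}_\ell$ at or to the left of $q$, and then read off the structural conclusion from Invariant~\ref{inv:intersection}. Your write-up is simply a more explicit rendering (the three-term inequality chain, the boundary case $x(v.p)=x_q$, and the separate treatment of $v$ being a leaf) of the paper's informal argument.
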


\begin{proof}
We consider the case $u = u^*$; the other case is symmetric. 
Let $e_u$ be the segment of $\mathcal{L}_\ell$ stored in $u$.
By Case~3, $u.p$ is strictly to the left of $v.p$. 
Furthermore, since $u = u^*$, the intersection point
$q$ must be on $e_u$. Thus, $q$ cannot
be on the right of $v.p$, because otherwise
$v.p$ would be a point on $\mathcal{L}_r$ that lies below $e_u$
and to the left of $q$, which is impossible.
Since $q$ is strictly to the left of $v.p$;
by Invariant~\ref{inv:intersection}, if $v$ is
an inner node, $v^*$ must be in $v.\ell$, and hence
both $v$ and $v.\ell$ lie on $\pi_r$. If $v$ is a
leaf, then $v = v^*$.
\end{proof}

We can now show that the invariant is maintained.

\begin{lemma}
The procedure \texttt{oneStep} either correctly
reports that $u^*$ and $v^*$ have been found, or it maintains Invariant~\ref{inv:intersection}. In the latter case, it either
pops an element from one of the two stacks, or it
discovers a new node on $\pi_\ell$ or $\pi_r$.
\end{lemma}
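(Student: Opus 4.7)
The plan is to verify the lemma by case analysis following the branches of \texttt{oneStep}. At each step I would track two things in parallel: (i) the covered-prefix condition of Invariant~\ref{inv:intersection} for $u^*$ and its symmetric version for $v^*$; and (ii) the disjunctive path clause $u \in \pi_\ell \vee v \in \pi_r$. The main tool is the exclusion statement that defines each of Cases~1--3 for the pair of nodes being compared, combined with the version of Invariant~\ref{inv:intersection} that held immediately before the step.

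For Case~3 at $(u,v)$, the case definition pins $u^* \in u.r$ when $u$ is inner or $v^* \in v.\ell$ when $v$ is inner. Pushing $u$ onto \texttt{uStack} and moving to $u.r$ re-indexes $u.\ell$ through the new stack entry, leaving the covered prefix unchanged as a set; the symmetric statement holds for $v$. At least one of the two new current nodes must lie on its respective correct path, which supplies both the path clause and a newly discovered node on $\pi_\ell$ or $\pi_r$. In the termination subcase where $u$ and $v$ are both leaves, the invariant forces one of $u = u^*$ or $v = v^*$, and Lemma~\ref{lem:leaf_inv} forces the other, so the equality check correctly returns \emph{true}.

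Case~1 at $(u,v)$ is the delicate branch, split into three sub-branches by the pseudo-code. If \texttt{uStack} is empty, the invariant forces $u \in \pi_\ell$, the outer exclusion gives $u^* \in u.\ell$, and moving $u$ to $u.\ell$ discovers a new $\pi_\ell$-node. If $u$ is a leaf and \texttt{uStack} is nonempty, the outer case forces $u \neq u^*$, so the old invariant must already have provided $v \in \pi_r$; this carries through the pop-and-move on $u$, and progress is the stack pop. For the remaining inner-$u$, nonempty-stack sub-branch, the algorithm performs an auxiliary comparison of the stack top $u'$ against $v$, and I would handle the three outcomes in turn. Inner Case~1 yields $u^* \notin u'.r$, so popping $u'$ and moving $u \gets u'.\ell$ drops only $u^*$-free leaves, and the path clause is preserved either because $u'.\ell \in \pi_\ell$ or by the fallback $v \in \pi_r$ that the old invariant forces once $u \notin \pi_\ell$ is known. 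Inner Case~2 combined with outer Case~1 excludes $u.r$ within $u$'s subtree and $v.\ell$ within $v$'s subtree, so the moves $u \gets u.\ell$, $v \gets v.r$ preserve the covered sets, and the pre-step disjunction forces at least one of the two new nodes onto its correct path. Inner Case~3 together with outer Case~1 pins $u^*$ to $u.\ell$ and $v^*$ to $v.\ell$ after intersecting the exclusions with the covered sets, matching the prescribed moves, with the push onto \texttt{vStack} preserving the suffix as before. Case~2 at $(u,v)$ is handled by the mirror argument.

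The main obstacle I anticipate is bookkeeping the disjunctive path clause across these branches: in several subcases the new current node on one side is not on its correct path, and the proof must explicitly transfer the certificate of $u \in \pi_\ell \vee v \in \pi_r$ to the other side by combining the old invariant with the freshly derived exclusions. Once that transfer is spelled out in each of the non-trivial sub-branches of Case~1 (and in their mirror images in Case~2), the progress clause follows by inspection, since every step either pops a stack or moves to a child of a node already certified to lie on $\pi_\ell$ or $\pi_r$.
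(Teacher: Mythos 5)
Your overall strategy---a branch-by-branch case analysis of \texttt{oneStep}, combining the exclusion defining each of Cases~1--3 with the pre-step instance of Invariant~\ref{inv:intersection}, and falling back on Lemma~\ref{lem:leaf_inv} at the leaves---is exactly the paper's proof, and most branches (outer Case~3 with two inner nodes, the empty-stack and leaf sub-branches of Case~1, and the inner Case~1 and Case~2 sub-branches) are argued the same way.

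There is, however, one step that does not go through as written. In the sub-branch where Case~1 holds at $(u,v)$ and Case~3 holds at $(u',v)$, you claim the exclusions ``pin $u^*$ to $u.\ell$ \emph{and} $v^*$ to $v.\ell$.'' Case~3 is inherently disjunctive, so at most the disjunction $u^*\in u.\ell$ \emph{or} $v^*\in v.\ell$ can be derived, and even that requires the transfer argument you only gesture at in your closing paragraph: if $u\in\pi_\ell$ then outer Case~1 forces $u.\ell\in\pi_\ell$; if $u\notin\pi_\ell$ then $u'.r\notin\pi_\ell$ (since $u$ lies in the subtree of $u'.r$), which kills the first disjunct of the inner Case~3 and forces $v.\ell\in\pi_r$. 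Relatedly, you do not handle the corner case where $v$ is a leaf in this sub-branch---the code then does not move $v$, so progress must come from the $u$ side---which the paper closes by using Lemma~\ref{lem:leaf_inv} to show $v$ cannot be a leaf once $u'.r\notin\pi_\ell$ is known. A similar appeal to Lemma~\ref{lem:leaf_inv} is also needed in the mixed leaf/inner situation of the outer Case~3 (when $u$ is a leaf equal to $u^*$ and only $v$ moves), whereas your write-up invokes that lemma only for the both-leaves termination.
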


\begin{proof}
First, suppose Case~3 occurs. The invariant
that \texttt{uStack} and $u$ cover a prefix of
$\mathcal{L}_\ell$ and that \texttt{vStack} and $v$
cover a suffix of $\mathcal{L}_r$ is maintained.
Furthermore, if both $u$ and $v$ are inner nodes,
Case~3 ensures that $u^*$ is in $u.r$ or
to the right of $u$, or that $v^*$ is in $v.\ell$ or to the left of $v$. Suppose the former case
holds. Then, Invariant~\ref{inv:intersection} 
implies that $u^*$ must be in $u.r$, and
hence $u$ and $u.r$ lie on $\pi_\ell$. 
Similarly, in the second case, 
Invariant~\ref{inv:intersection} gives that
$v$ and $v.\ell$ lie in $\pi_r$.
Thus,
Invariant~\ref{inv:intersection} is maintained
and we discover a new node on $\pi_\ell$ or
on $\pi_r$.
%
%
Next, assume $u$ is a leaf and $v$ is an inner node.
If $u \neq u^*$, then as above, 
Invariant~\ref{inv:intersection} and Case~3 imply
that $v \in \pi_r$ and $v.\ell \in \pi_r$, 
and the lemma holds.
If $u = u^*$, the lemma follows from Lemma~\ref{lem:leaf_inv}.
The case that $u$ is an inner node and $v$ a 
leaf
is symmetric. If both $u$ and $v$ are leaves, Lemma~\ref{lem:leaf_inv}
implies that \texttt{oneStep} correctly  reports $u^*$ and
$v^*$.

Second, suppose Case~1 occurs. Then, 
$u^*$ cannot be in $u.r$, if $u$ is an
inner node, or  $u^*$ must be to the left
for a segment left of $u$, if $u$ is 
a leaf.
Now, if \texttt{uStack}
is empty, Invariant~\ref{inv:intersection}
and Case~1 imply that $u$ cannot be a leaf 
(because $u^*$ must be in the subtree of $u$)
and that $u.\ell$ is a new node on $\pi_\ell$.
Thus, the lemma holds in this case. 
Next, if $u$ is a leaf,  
Invariant~\ref{inv:intersection} and
Case~1 imply that $v \in \pi_r$. Thus, we pop
\texttt{uStack} and maintain the invariant; 
the lemma holds.
Now, assume that \texttt{uStack} is not
empty and that $u$ is not a leaf. 
Let $u'$ be the top of $\texttt{uStack}$.
First, if the comparison between $u'$ and $v$ results
in Case~1, then $u^*$ cannot be in
$u'.r$, and in particular, $u \not\in \pi_\ell$.
Invariant~\ref{inv:intersection} shows
that $v \in \pi_r$, 
and we pop an element from \texttt{uStack},
so the lemma holds.
Second, if the comparison between $u'$ and $v$
results in Case~2, then $v^*$ cannot
be in  $v.\ell$, if $v$ is an inner node.
Also, if $u \in \pi_\ell$, then necessarily also
$u.\ell \in \pi_\ell$, since Case~1
occurred between $u$ and $v$. If $v \in \pi_r$,  
since Case~2 occurred between $u'$ and $v$, the node
$v$ cannot
be a leaf and $v.r \in \pi_r$. Thus, in either case
the invariant is maintained and we discover a new
node on $\pi_\ell$ or on $\pi_r$.
Third, assume the comparison between
$u'$ and $v$ results in Case~3. If
$u \in \pi_\ell$, then also $u.\ell \in \pi_\ell$,
because $u.r \in \pi_\ell$ was excluded by
the comparison between $u$ and $v$. In this case,
the lemma holds. If $u \not\in \pi_\ell$,
then also $u'.r \not \in \pi_\ell$, so the fact
that Case~3 occurred between $u'$ and $v$ implies that
$v.\ell$ must be on $\pi_r$ (in this case,
$v$ cannot be a leaf, since otherwise we would
have $v^* = v$ and Lemma~\ref{lem:leaf_inv} would
give $u'.r \in \pi_\ell$, which we have already ruled out).
The argument for Case~2 is symmetric.
\end{proof}

\begin{figure}
    \centering
	\begin{subfigure}{0.801\textwidth}
        \includegraphics[width=\textwidth]{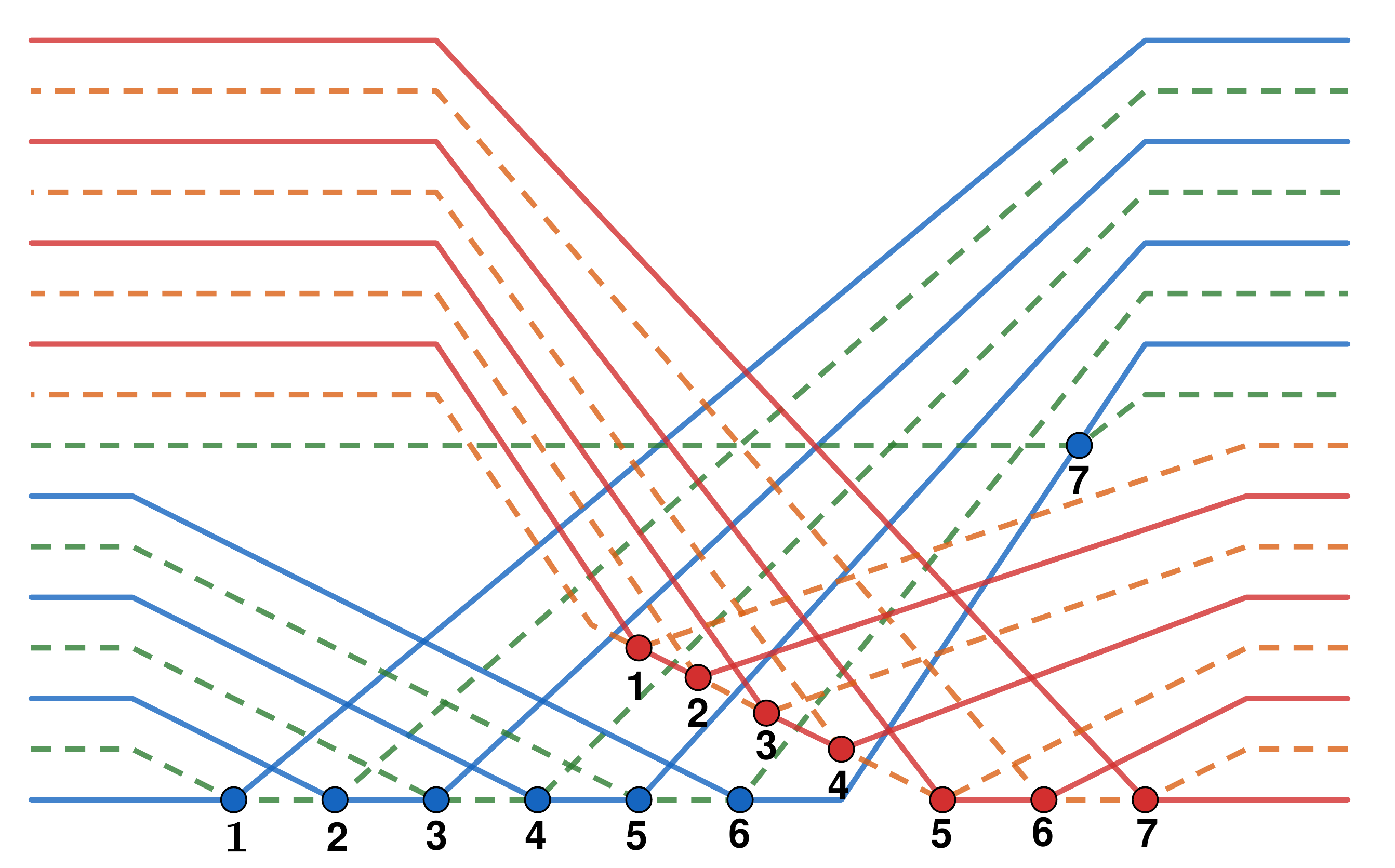}
        \caption{Demonstration of two set of pseudo-lines and their lower envelope: (i) the blue and green pseudo-lines, (ii) the red and orange pseudo-lines. The blue and the red dots represents the intersection points on the lower envelopes.}
    	\label{f:intersection_point_demo2} 
	\end{subfigure}
	\begin{subfigure}{0.8\textwidth}
        \includegraphics[width=\textwidth]{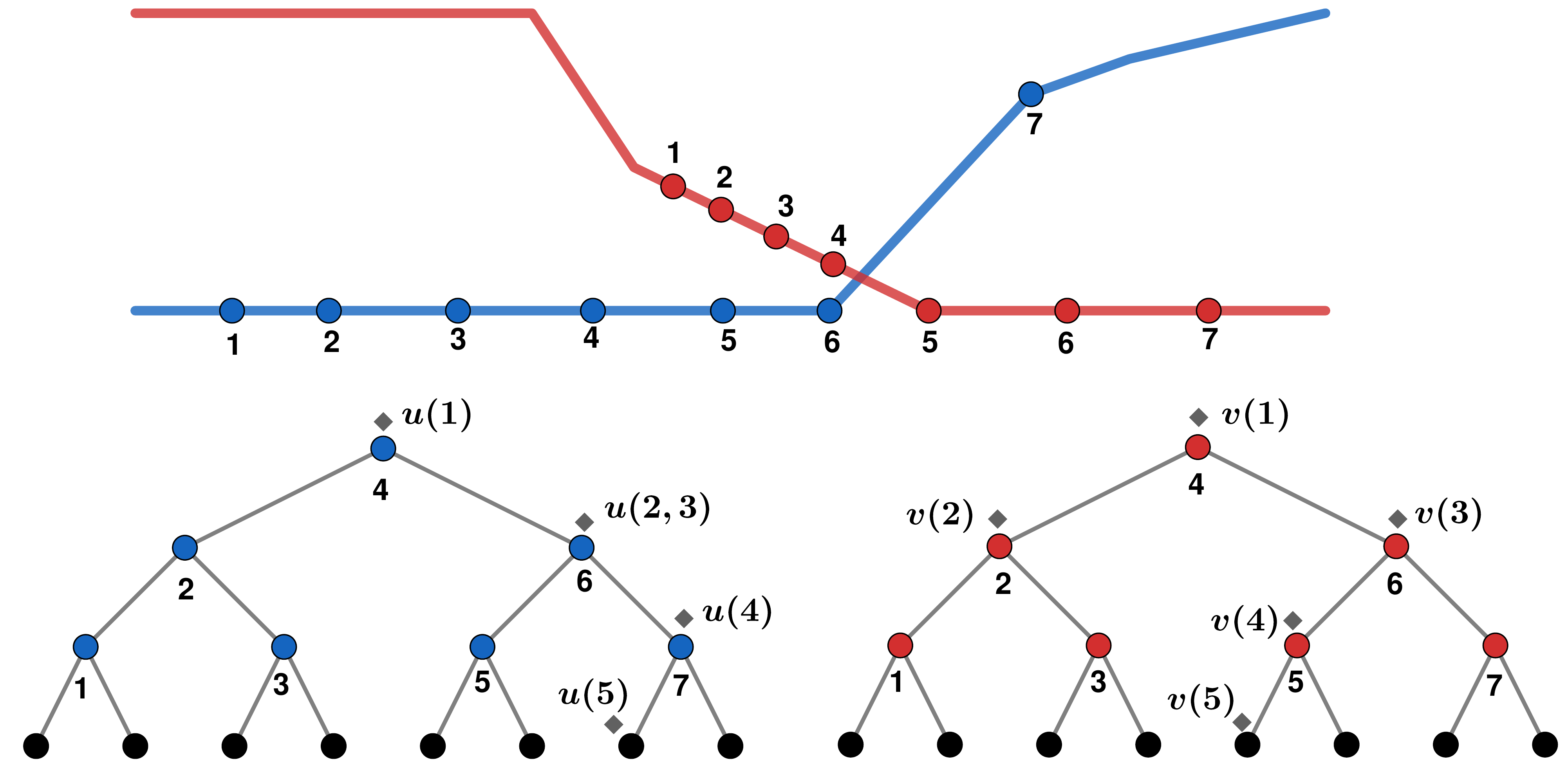}
        \caption{The top figure shows the lower envelope of (a). The bottom figure shows the the trees which maintain the lower envelopes. $u(i)$ and $v(i)$ shows the position of the pointers $u$ and $v$ at step $i$, during the search procedure.}
    	\label{f:intersection_point_demo} 
	\end{subfigure}
    \caption{Example of finding the intersection point of two lower envelopes:}
    \begin{tabular}{|c|cc|cc|c|}
    \hline 
        Step  & $u$ & $v$ & uStack & vStack & Procedure case \\\hline
        1  & 4 & 4 & $\emptyset$ & $\emptyset$ & Case~3 \\
        2  & 6 & 2 & 4 & 4 & Case~2 $\rightarrow$ Case~2 \\
        3  & 6 & 6 & 4 & $\emptyset$ & Case~3 \\
        4  & 7 & 5 & 4, 6 & 6 & Case~1 $\rightarrow$ Case~3 \\
        5  & 7* & 5* & 4, 6 & 6, 5 & Case~3 $\rightarrow$ End\\\hline
    \end{tabular}
    \label{fig:algodemo}
\end{figure}

\begin{lemma}
    The intersection point $q$ between $\mathcal{L}_\ell$ and 
    $\mathcal{L}_r$ can be found in $O(\log n)$ time.
\end{lemma}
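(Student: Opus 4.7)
The plan is to invoke the previous lemma, which establishes that each call to \texttt{oneStep} either terminates with the correct answer or preserves Invariant~\ref{inv:intersection} while making measurable progress. Starting with $u$ and $v$ at the roots of $\mathcal{L}_\ell$ and $\mathcal{L}_r$ and both stacks empty, I will argue that after $O(\log n)$ invocations of \texttt{oneStep} we necessarily reach the terminating configuration in which $u=u^*$ and $v=v^*$, so the algorithm returns the intersection point $q$ correctly.

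For the running-time analysis, the main observation is a potential argument on the two stacks together with the discovered prefixes of the search paths. Each iteration of \texttt{oneStep} does a constant number of comparisons between nodes (each such comparison uses $O(1)$ primitive operations, since the stored envelopes $\mathcal{L}(u)$ and $\mathcal{L}(v)$ consist of only two pseudo-lines each, so testing whether $u.p$ lies above or below $\mathcal{L}(v)$ is constant-time), and by the previous lemma, every iteration either pops an element from \texttt{uStack} or \texttt{vStack}, or it discovers a new node on $\pi_\ell$ or $\pi_r$. Pushes onto \texttt{uStack} or \texttt{vStack} occur only inside iterations of the first kind, i.e.\ only when a new node on a correct path is discovered.

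Since $\mathcal{L}_\ell$ and $\mathcal{L}_r$ are balanced binary search trees of size $O(n)$, the correct paths $\pi_\ell$ and $\pi_r$ have total length $O(\log n)$. Therefore the total number of ``discovery'' iterations is $O(\log n)$, each of them pushes at most one element onto one of the two stacks, and hence the total number of pop iterations is also $O(\log n)$. Summing the two kinds of iterations gives a bound of $O(\log n)$ on the total number of calls to \texttt{oneStep}, and the total running time is $O(\log n)$.

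The main obstacle I expect is justifying rigorously that every iteration indeed falls into exactly one of the two categories (pop versus discovery), in particular in the branches of \texttt{oneStep} that perform a second comparison between $u'$ and $v$; but this is precisely what the preceding lemma has already established, so the present lemma follows by collecting the correctness and progress statements and summing over iterations.
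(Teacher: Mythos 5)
Your proposal is correct and follows essentially the same argument as the paper: each call to \texttt{oneStep} either pops a stack element or discovers a new node on $\pi_\ell$ or $\pi_r$, pushes occur only at discovery steps, and since the correct paths have length $O(\log n)$ the total number of iterations (and hence the running time, each iteration being $O(1)$) is $O(\log n)$. Your added remark that each comparison is constant-time because $\mathcal{L}(u)$ and $\mathcal{L}(v)$ consist of only two pseudo-lines is a nice explicit touch that the paper leaves implicit.
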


\begin{proof}
    In each step, we either discover a new node of 
    $\pi_\ell$ or of $\pi_r$, or we pop an element
    from \texttt{uStack} or \texttt{vStack}. 
    Elements are pushed only when
    at least one new  node on $\pi_\ell$ or 
    $\pi_r$ is discovered. 
    As $\pi_\ell$ and $\pi_r$ are each a path from the root to a leaf in a balanced binary tree, 
    we  need $O(\log n)$
    steps.
\end{proof}

\subparagraph*{Acknowledgments.} We thank Haim Kaplan and Micha Sharir for helpful discussions. Work by P.A. has been supported by
NSF under grants CCF-15-13816, CCF-15-46392, and IIS-14-08846, by ARO 
grant W911NF-15-1-0408, and by grant 2012/229 from the U.S.-Israel 
Binational Science Foundation. Work by D.H.\ and R.C.\ has been supported in part by the Israel Science Foundation
(grant no.~825/15), by the Blavatnik Computer Science Research Fund,
by the Blavatnik Interdisciplinary Cyber Research Center at Tel Aviv
University, and by grants from Yandex and from Facebook. Work by W.M. has been partially supported by ERC STG 757609 and GIF grant 1367/2016.

\bibliography{eurocg19_example}

\appendix

\newpage

\end{document}